\newtheorem{theorem}{Theorem}[section]
\newtheorem{lemma}[theorem]{Lemma}
\title{Random Norming Aids Analysis of Non-linear Regression Models with Sequential Informative Dose Selection}
\author{
Zhantao Lin \\
  Department of Statistics\\
  George Mason University\\
  Fairfax, VA 22030 \\
  \texttt{} \\
   \And
 Nancy Flournoy \\
  Department of Statistics\\
  University of Missouri\\
  Columbia, MO 65203\\
  \texttt{} \\
     \And
 William F. Rosenberger \\
  Department of Statistics\\
  George Mason University\\
  Fairfax, VA 22030 \\
  \texttt{} \\
}
\begin{document}
\maketitle

\begin{abstract}
A two-stage adaptive optimal design is an attractive option for increasing the efficiency of clinical trials. In these designs, based on interim data, the locally optimal dose is chosen for further exploration, which induces dependencies between data from the two stages. When the maximum likelihood estimator (MLE) is used under nonlinear  regression  models  with  independent  normal  errors in a pilot study where the first stage sample size is fixed, and the second stage sample size is large, the Fisher information fails to normalize the estimator adequately asymptotically, because of dependencies. In this situation, we present three alternative random information measures and show that they provide better normalization of the MLE asymptotically. The performance of random information measures is investigated in simulation studies, and the results suggest that the observed information performs best when the sample size is small. 
\end{abstract}


\keywords{Adaptive optimal designs; Stable convergence; Random information measures; Generalized Cram{\'e}r-Slutzky theorem; Inference for stochastic processes.}

\section{Introduction}
Two-stage designs are used for many purposes, including enrichment, sample size re-estimation and to modify randomization probabilities to improve the efficiency and/or efficacy of estimators. All these procedures use accumulated data to change the operation of the experimental design, which induces dependencies between the first and second stage data. Our interest lies in the effects of such dependencies on inference at the end of a pilot study where the first stage sample size is fixed, and the second stage sample size is large.

In two-stage enrichment designs, patients more likely to benefit from the treatment are identified based on data from the first stage, and second stage trials are conducted in the identified subpopulation [e.g., \citet{simon2004evaluating}, \citet{Anastasia:roy}, \citet{rosenblum:van}, \citet{trippa:rosner}, \citet{zang:guo}]. 
Two-stage sample size re-estimation methods are conducted by revising the final sample size with parameter estimation from the first stage [e.g., \citet{Stei:Atwo:1945}, \citet{proschan}, \citet{shih}, \citet{schwartz:denne}, \citet{zhong:carlin}, \citet{tarimainterim}, \citet{Broberg2017}]. In two-stage adaptive optimal designs, information from the first stage is used to estimate optimal treatment assignment probabilities for the second stage [e.g., \citet{haines2003bayesian}, \citet{lane:flou:two-:2012}, \citet{englert2013optimal}, \citet{lane:flou:info:2014}, \citet{shan2016optimal}].

\citet{lane:flou:two-:2012} studied asymptotic distributional properties of the maximum likelihood estimator for nonlinear regression models with independent normal errors.  In their study, they used the Fisher information to norm the score function when taking limits, obtaining a limiting distribution for the maximum likelihood estimator that is a random scale mixture of normal random variable. Use of this result requires knowledge of the distribution of the limiting scaling random variable. \citeauthor{lane:flou:two-:2012} found this distribution in the special case of an exponential mean function. But the method used is not generalizable, and so their result is informative, but not generally useful in practice.

In their review paper on likelihood theory for stochastic processes, \citet{Barn:Sore:arev:1994} describe conditions under which maximum likelihood estimators normed with the Fisher information converge to randomly scaled mixture of normal distributions, as was the case in \citet{lane:flou:two-:2012}. Limiting random mixtures of normal random variables also arise in \citet{Ivan:Rose:Durh:Flou:birt:2000}, \citet{ivanova:flournoy}, and \citet{may:flournoy}.  But \citeauthor{Barn:Sore:arev:1994} describe a solution to this problem. Namely, they describe how using a random norming in lieu of the Fisher information can lead to a standard normal distribution instead. 

This paper examines the use of random normings in a practical situation.
In particular, we evaluate these alternative random norms in the same context as in \citet{lane:flou:two-:2012} and \citet{lane:flou:info:2014}, and show how to apply them to obtain the more useful standard normal distribution. Then we 
compare the rates of convergence and efficiencies of the different norming alternatives.

Accordingly, this paper is organized as follows. In Section 2, we present the model to be studied in this paper. In Section 3, we describe stable and mixing convergences, which are needed, and a generalized version of the Cram{\'e}r-Slutzky theorem. In Section 4, we present the main asymptotic results for maximum likelihood estimators with random normings. We conduct simulation studies to compare the  efficiencies obtained with these normings for exponential and logistic models in Section 5.

\section{The Model}
Let $\{y_{ij}\}$ be observations from a two-stage adaptive design, where $n_{i}$ is the number of observations and $x_{i}\in[a,b]$ is the single-dose used for the $i$th stage, $i=1,2$. To avoid degenerate cases, we assume $n_i\ge 1, i=1,2$, and set $n=n_1+n_2$. We consider a general regression model with independent normal errors:
\begin{equation}\label{eq:model}
y_{ij}=\eta(x_{i},\theta)+\epsilon_{ij},\quad\epsilon_{ij}\sim N(0,\sigma^2),
\end{equation}
where $\eta(x_{i},\theta)=E(y_{ij}\vert x_{i})$ is some (possibly) nonlinear mean function, twice differentiable by $\theta$; $x_{1}\in[a,b]$ is given; and for simplicity, $\theta$ is a 1-dimensional parameter. In addition, adaptation is restricted to the choice of  $x_{2}$,  and $x_{2}$ depends on stage~1 data only through sufficient statistics from stage~1.
More specifically, $x_{2}=x_{2}(\bar{y}_1$) is a random function, where $\bar{y}_{j}=(y_{11}+\cdots+y_{1,n_j})/n_j$. Define $u_i=\sqrt{n_i}[\bar{y}_{i}-\eta(x_i,\theta)]/\sigma$. 
Then $u_1\sim N(0,1)$. 
But $u_2=u_2(\bar{y}_1)$. As for $u_2$, $E[u_2]=E_{\bar{y}_1}E[u_2\vert \bar{y}_1]=0$ and
 $Var[u_2]=E_{\bar{y}_1}E[u_2^2\vert \bar{y}_1]=1$.  But $u_2$
is only $N(0,1)$ conditionally on $\bar{y}_{1}$.  

Let $\hat{\theta}_{n_i}$ denote maximum likelihood estimators of $\theta$ based on stage~$i$  data, $i=1,2$, and let $\hat{\theta}_{n}$ denote the maximum likelihood estimator of $\theta$ based on all $n$ trials. 
 Since maximum likelihood estimators (MLEs) are functions of sufficient statistics, $\hat{\theta}_{n_1}$ is a function of the first stage mean response $\bar{y}_1$, and both  $\hat{\theta}_{n_2}$ and  $\hat{\theta}_{n}$ are functions of $(\bar{y}_1,\bar{y}_2)$.

Then the likelihood function is
\begin{align*}
\mathcal{L}_n(\theta|y_{11},\ldots,y_{1,n_1},y_{21},\ldots,y_{2,n_2})&=f_{n}(y_{11},\ldots,y_{1,n_1},y_{21},\ldots,y_{2,n_2}|\theta)\\
&=f_{n_1}(y_{11},\ldots,y_{1,n_1}|\theta)f_{n_2}(y_{21},\ldots,y_{2,n_2}|\theta,y_{11},\ldots,y_{1,n_1})\\
&\propto\exp\left\{-\tfrac{1}{2\sigma^2}\sum\limits^{n_1}_{i=1}[y_{1,i}-\eta(x_1,\theta)]^2
-\tfrac{1}{2\sigma^2}\sum\limits^{n_2}_{j=1}[y_{2,j}-\eta(x_2,\theta)]^2\right\}\\
&\propto\exp\left\{-\tfrac{n_1}{2\sigma^2}[\bar{y}_{1}-\eta(x_1,\theta)]^2
-\tfrac{n_2}{2\sigma^2}[(\bar{y}_{2}-\eta(x_2,\theta)]^2\right\}.
\end{align*}
Letting $\dot{\eta}(x_{i},\theta)=\tfrac{d}{d\theta}\eta(x_{i},\theta)$, and $\ddot{\eta}(x_i,\theta)=\frac{d^2}{d\theta^2}\eta(x_i,\theta)$, the score function can be written as
$$S_n(\theta)=\tfrac{d}{d\theta}\log \mathcal{L}_n(\theta)
=\tfrac{n_1}{\sigma^2}[\bar{y}_{1}-\eta(x_1,\theta)]\dot{\eta}(x_1,\theta)
+\tfrac{n_2}{\sigma^2}[\bar{y}_{2}-\eta(x_2,\theta)]\dot{\eta}(x_2,\theta).$$



\section{Stable and Mixing Convergence}
\subsection{Motivation and Definitions}
Let $\{X_{n}\}_{n\ge 1}$ and $X$ be real random variables defined on some probability space $(\Omega , \mathcal{F} , P)$, and let $\mathcal{G}\subset\mathcal{F}$ be a sub$-\sigma -$field.
Given a sequence of random variables $\{Y_{n}\}_{n\ge 1}$, suppose one wants to obtain the limiting distribution of the product of $Y_{n}X_{n}$.
 If $Y_{n}$ converges in probability to a constant $c$, and $X_{n}$ converges in distribution to $X$, then $Y_{n}X_{n} $ $\stackrel{d}{\rightarrow} cX \textrm{ as } n\rightarrow\infty$ by the Cram{\'e}r-Slutzky theorem [\cite{slutsky1925uber}].  
 However,
 \citet{lane:flou:two-:2012} showed for model \eqref{eq:model} that if $n_1/n$ is small (and provided common regularity conditions with $\dot{\eta}\equiv\dot{\eta}(x_{i},\theta)\neq 0,\ |\dot{\eta}|<\infty$), then 
 \begin{equation}\label{eq:lane}
 \sqrt[]{n}(\hat{\theta}_{n}-\theta)
 \approx U V_{n_2},
  \end{equation}
where $V_{n_2}=[\bar{y}_{2}-\eta(x_2,\theta)]\ /(\sigma \ / \ \sqrt[]{n_{2}})\sim Z$ for every $n_2$, where $Z\sim N(0,1)$; and
$U=\sigma[\dot{\eta}(x_{2},\theta)]^{-1}$ is independent of $V_{n_2}$ and $Z$. Since Equation \eqref{eq:lane} holds for all $n_2$, it holds in the limit as $n_2\rightarrow\infty$ with $n_1$ fixed. That is, $\sqrt[]{n}(\hat{\theta}_{n}-\theta)\stackrel{d}{\rightarrow}UZ$  as   $n_2\rightarrow\infty$  with $n_1$  fixed and $U$ independent of $Z$.  But $U$ is a random function of $\bar{y}_{1}$ that does not converge to a constant when $n_1$ is held fixed. So one cannot divide both sides of Equation \eqref{eq:lane} by $U$ and apply the classical Cram{\'e}r-Slutzky theorem  to obtain a  $N(0,1)$ limit.

To obtain a standard normal limit instead of the normal mixture in Equation \eqref{eq:lane} requires a generalized version of the Cram{\'e}r-Slutzky theorem, which is given in Lemma~\ref{con:function} below.  The generalized Cram{\'e}r-Slutzky theorem requires the concepts of stable and mixing convergence, which were introduced by \citet{Renyi1963bis}. So before proceeding, we recall these concepts. A thorough description of stable and mixing convergence can be found in \citet{Haus:Lusc:Stab:2015}.

Let $P_E(\cdot)=P(\ \cdot \ \cap E)/P(E)$ denote the conditional probability given the event $E$. We say that  $\{X_{n}\}$ converges $\mathcal{G}-$stably to $X$ as $n\rightarrow\infty$ 
if
\begin{align}\label{def:stable}
(X_{n})_{n\ge 1} \stackrel{d}{\rightarrow} X \textrm{ under } P_E \textrm{ for every event }  E\in \mathcal{G} \textrm{ with } P(E)>0.
\end{align}
Stable convergence is stronger than convergence in distribution, but not as strong as convergence in probability.
If $X$ is independent of $\mathcal{G}$, then the limit is said to be \emph{mixing}.

\subsection{Stable Convergence  Under Model \eqref{eq:model}.}
Under model \eqref{eq:model}, $\mathcal{F}_{n_1}=\sigma(\bar{y}_{1})$ is a sub~$\sigma-$field of $\mathcal{F}=\mathcal{F}_{n_1+n_2}=\sigma(\bar{y}_1,\bar{y}_2)$. 
In Lemma~\ref{stable}, we show that the convergence given in \eqref{eq:lane} is, in fact, stable convergence.  In the context of Equation~(\ref{def:stable}), take $\mathcal{G}=\mathcal{F}_{n_1}$.

\begin{lemma}\label{stable}
If $\dot{\eta}\neq 0$ and $ \vert\dot{\eta}\vert<\infty$ under model~\ref{eq:model}, 
$\sqrt[]{n}(\hat{\theta}_{n}-\theta)\rightarrow UZ\ \mathcal{F}_{n_1}-$stably with $U$ independent of $Z$ as $n_2\rightarrow\infty$ while $n_1$ is fixed.

\end{lemma}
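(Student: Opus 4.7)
The plan is to combine the linearization leading to \eqref{eq:lane} with a direct verification of the stability definition \eqref{def:stable}, exploiting the fact that $V_{n_2}$ is \emph{exactly} $N(0,1)$ conditional on $\mathcal{F}_{n_1}$ for every $n_2$, not just asymptotically. Concretely, I would first establish
\[
\sqrt{n}(\hat{\theta}_{n} - \theta) = U V_{n_2} + R_n, \qquad R_n = o_P(1),
\]
as $n_2 \to \infty$ with $n_1$ fixed. This follows by Taylor expanding the score equation $S_n(\hat{\theta}_n) = 0$ about $\theta$, so that $\sqrt{n}(\hat{\theta}_n - \theta) = -\sqrt{n}\,S_n(\theta)/\dot{S}_n(\tilde{\theta}_n)$ for an intermediate $\tilde{\theta}_n$. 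From the explicit form of $S_n(\theta)$ displayed in Section~2, the stage-1 contributions to the numerator and denominator are $O_P(1)$ while the stage-2 contributions scale like $\sqrt{n_2}$ and $n_2$, respectively; after dividing numerator and denominator by $n_2$ and invoking a law of large numbers on the $y_{2j}$, the ratio reduces in probability to $\sigma V_{n_2}/\dot{\eta}(x_2,\theta) = U V_{n_2}$, with $\sqrt{n}/\sqrt{n_2}\to 1$ absorbing the passage from $n$ to $n_2$.

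Next, I would verify stable convergence of $V_{n_2}$ directly from \eqref{def:stable}. Because $x_2 = x_2(\bar{y}_1)$ is $\mathcal{F}_{n_1}$-measurable and the stage-2 errors $\epsilon_{2j}$ are i.i.d.\ $N(0,\sigma^2)$ and independent of $\mathcal{F}_{n_1}$, we have $V_{n_2} \mid \mathcal{F}_{n_1} \sim N(0,1)$; hence $V_{n_2}$ is actually independent of $\mathcal{F}_{n_1}$ and $V_{n_2} \stackrel{d}{=} Z \sim N(0,1)$ for every $n_2$. Then for any $E \in \mathcal{F}_{n_1}$ with $P(E) > 0$,
\[
P_E\!\left(V_{n_2} \le t\right) \;=\; \frac{1}{P(E)}\,E\!\left[\mathbf{1}_E\, P(V_{n_2}\le t \mid \mathcal{F}_{n_1})\right] \;=\; \Phi(t),
\]
which is exactly the defining property of $\mathcal{F}_{n_1}$-stable (in fact mixing) convergence of $V_{n_2}$ to a $Z \sim N(0,1)$ that is independent of $\mathcal{F}_{n_1}$.

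Finally, since $U = \sigma/\dot{\eta}(x_2,\theta)$ is $\mathcal{F}_{n_1}$-measurable, joint stable convergence $(U, V_{n_2}) \to (U, Z)$ follows from \eqref{def:stable} by conditioning on the $\mathcal{F}_{n_1}$-fixed coordinate $U$ inside each event $E$; the continuous mapping theorem for stable convergence then yields $U V_{n_2} \to U Z$ $\mathcal{F}_{n_1}$-stably, and the remainder $R_n = o_P(1)$ is absorbed by a Slutsky-type argument since convergence in probability under $P$ transfers to each $P_E$. Independence of $U$ and $Z$ in the limit is automatic because $Z$ is independent of $\mathcal{F}_{n_1}$ while $U$ is $\mathcal{F}_{n_1}$-measurable. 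I expect the main obstacle to be Step~1: making the linearization rigorous with a random design point $x_2(\bar{y}_1)$ requires using $\dot{\eta}\neq 0$ together with continuity of $\dot{\eta}$ on the compact interval $[a,b]$ to bound $1/\dot{\eta}(x_2,\theta)$ uniformly, and controlling the second-order Taylor remainder via the growing stage-2 information. The stability portion itself is almost immediate, precisely because $V_{n_2}\mid\mathcal{F}_{n_1}$ has the same $N(0,1)$ law for every $n_2$.
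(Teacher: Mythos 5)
Your proposal is correct and the stability argument is essentially the paper's own: both exploit that $V_{n_2}$ is exactly $N(0,1)$ and independent of $\mathcal{F}_{n_1}$ for every $n_2$ while $U$ is $\mathcal{F}_{n_1}$-measurable, so that $UV_{n_2}\sim UZ$ under $P_E$ for every $E\in\mathcal{F}_{n_1}$ with $P(E)>0$, giving the required convergence in distribution under each $P_E$. The only substantive difference is that you re-derive the linearization \eqref{eq:lane} via a Taylor expansion of the score and track the $o_P(1)$ remainder (transferring it to each $P_E$), whereas the paper imports \eqref{eq:lane} directly from \citet{lane:flou:two-:2012} and begins the proof from that identity.
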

\begin{proof}

Because of Equation \eqref{eq:lane} we only have to show that 
\[
U V_{n_2}\rightarrow UZ\quad \mathcal{F}_{n_1}-stably\mbox{ as } n_2\rightarrow\infty \mbox{ with } n_1 \mbox{ fixed}. \tag{*}
\]
Let the event $E\in\mathcal{F}_{n_1}$ with $P(E)>0$. By $\mathcal{F}_{n_1}$-measurability of $U$ and independence of $V_{n_2}$ and $\mathcal{F}_{n_1}$ and of $Z$ and $\mathcal{F}_{n_1}$ in combination with $V_{n_2}\sim Z$, we have $V_{n_2}U\sim ZU$ under $P_{E}$ for all $n_2$. In particular,
$$U V_{n_2}\stackrel{d}{\rightarrow} UZ \mbox{ under } P_{E} \mbox{ as } n_2\rightarrow\infty.$$
This proves $(*)$ in view of the definition of stable convergence in Equation \eqref{def:stable}

\end{proof}
\section{Standard normal limits with random norming}
\subsection{Random Norms and Their Limits under Model~\eqref{eq:model}}
\citet{Barn:Sore:arev:1994} describe random measures of information that can be used as norms for estimator and test statistics, and sometimes yield a more useful limit (e.g., standard normal) for MLEs. Following \citet{Barn:Sore:arev:1994}, we call them the observed, incremental observed and incremental expected information measures. In the two-stage setting, it not only makes sense to define increments in the log-likelihood  between individual subjects, but also  between stages because sufficient statistics are stage-wise data summaries. We examine both. 

First we formally define these, together with the expected (Fisher) information, and then we evaluate them under model \ref{eq:model}:
\begin{enumerate}
\item 
The {\bf observed information} is the negative derivative of the score function: 
\begin{align*}
j_n(\theta)&=-\dot{S}_n(\theta).
\end{align*}
\citet{Barn:Sore:arev:1994} and others have considered the observed information to be a standard with which the other information measures are  compared. 
\end{enumerate}

\begin{enumerate}
\setcounter{enumi}{1}
\item {\bf The Fisher information} is the variance of the score function. Assuming the integral and derivatives exist and are interchangeable, it is given by
\begin{align*}
i_n(\theta)&\equiv Var[S_n(\theta)]=E[S_n(\theta)]^2=E[-\dot{S}_n(\theta)].
\end{align*}

\citet{efron1978assessing}  studied the trade-off between the observed and expected (Fisher) information. They argue for using the observed information for data analysis after a study is completed, and they express a preference for using the expected information to design an experiment. \citet{Barn:Sore:arev:1994} state that \say{the difference (between the observed and expected information) is due, essentially, to the high content of ancillary information carried by the observed information.}
\citet{pierce_diss} and  \citet{firth1993bias} showed the observed information is larger than the Fisher information by an amount $O_p(n^{-1/2})$.

\qquad To define the incremental information in general, suppose a study is conducted in $K$ stages with $n_k$ subjects in each stage, $k=1,\ldots,K$. Then the log-likelihood can be written in increments as $
S_{n}(\theta)=\tfrac{d}{d\theta}\log \mathcal{L}_n(\theta)
=\sum_{i=1}^n\mathcal{D}_{i}(\theta)=\sum_{k=1}^K D_k(\theta),
$
where $\mathcal{D}_{i}(\theta)=S_{i}(\theta)-S_{i-1}(\theta)$ is the $i$th subject-wise increment and $D_k(\theta)=S_{n_1,\ldots, n_k}(\theta)-S_{n_1,\ldots, n_{k-1}}(\theta)$ is  the $k$th stage-wise increment with
$S_{0}(\theta)\equiv 0$.

\end{enumerate}

\begin{enumerate}
\setcounter{enumi}{2}
\item The {\bf incremental expected information} was introduced as the \emph{conditional variance} by \citet{levy1954theorie} in an early version of the Martingale central limit theorem. 
Let $\mathcal{F}_{i}$ denote the history of the experiment up through the trial for subject $i$, $i=1,\dots,n$; and let $\mathcal{F}_{0}$ be the trivial field. Then 
 $\mathcal{F}_{i}$ is a filtration of $\mathcal{F}$, i.e.: $\mathcal{F}_{0} \subset\mathcal{F}_{1} \subset\mathcal{F}_{2}\cdots \subset\mathcal{F}_{n}$. Using subject-wise and stage-wise increments in $S_{n}(\theta)$, we obtain the subject-wise and stage-wise incremental norms:
\begin{align*}
I_n^{\mathcal{D}}(\theta)
&=\sum\limits_{i=1}^{n}E_{\theta}[\mathcal{D}_i(\theta)^2|\mathcal{F}_{i-1}];\hspace{2cm}
I_n^{D}(\theta)=\sum\limits_{k=1}^{K}E_{\theta}[D_k(\theta)^2|\mathcal{F}_{n_{k-1}}].
\end{align*}
The incremental expected information  is also called the \emph{quadratic characteristic} of the score martingale.
\end{enumerate}

\begin{enumerate}
\setcounter{enumi}{3}
\item The {\bf incremental observed information} is given by
\begin{align*}
J^{\mathcal{D}}_n(\theta)
&=\sum\limits_{i=1}^{n}\mathcal{D}_i(\theta)^2;\hspace{2cm}
J^{D}_n(\theta)=\sum\limits_{k=1}^{K}D_k(\theta)^2.
\end{align*}
In the terminology of martingale theory, it is  called the \emph{quadratic variation}  of the score martingale [e.g., \citet{Barn:Sore:arev:1994}] and 
 \emph{squared variation}
[e.g., \citet{Hall:Heyd:Mart:1980}].
\citeauthor{Barn:Sore:arev:1994} show that use of the incremental observed information may  improve the robustness of estimators.
\end{enumerate}
It is common for the random information measures to converge to the Fisher information. However, there can be substantial differences with small sample sizes. Note that only observed and expected information are defined solely in terms of the likelihood function and its distribution law.  The incremental observed and expected information require knowledge of how the log-likelihood function increases from one subject or one stage to the next.

We now evaluate the random information norms that we will use to obtain standard normal limits for  $\hat\theta$. Under model \eqref{eq:model}, with $K=2$, the observed information is 
\begin{align}\label{obs}
j_n(\theta)
&=\frac{n_1}{\sigma^2}[\dot{\eta}(x_1,\theta)]^2
-\frac{n_1}{\sigma^2}\left[\bar{y}_1-\eta(x_1,\theta)\right]\ddot{\eta}(x_1,\theta)
+\frac{n_2}{\sigma^2}[\dot{\eta}(x_2,\theta)]^2
-\frac{n_2}{\sigma^2}\left[\bar{y}_2-\eta(x_2,\theta)\right]\ddot{\eta}(x_2,\theta).
\end{align}
The subject-wise and stage-wise incremental observed information are, respectively, 
\begin{align}\label{iobsub}
J_n^{\mathcal{D}}(\theta)&=\sum\limits^{n}_{i=1}[\mathcal{D}_i(\theta)]^2
=\sum\limits^{n_1}_{i=1}\frac{[y_i-\eta(x_1,\theta)]^2}{\sigma^4}[\dot{\eta}(x_1,\theta)]^2+\sum\limits^n_{i=n_1+1}\frac{[y_i-\eta(x_2,\theta)]^2}{\sigma^4}[\dot{\eta}(x_2,\theta)]^2
\end{align}
and
\begin{align}\label{iobs}
J^D_n(\theta)&=[D_1(\theta)]^2+[D_2(\theta)]^2=n_1^{2}\frac{[\bar{y}_1-\eta(x_1,\theta)]^2}{\sigma^4}[\dot{\eta}(x_1,\theta)]^2+n_2^{2}\frac{[\bar{y}_2-\eta(x_2,\theta)]^2}{\sigma^4}[\dot{\eta}(x_2,\theta)]^2.
\end{align}
The subject-wise and stage-wise incremental expected information are the same: 
\begin{align}\label{iexsub}
I^{\mathcal{D}}_n(\theta)&=\sum\limits^{n_1}_{i=1}E_{\theta}[\mathcal{D}_{i}(\theta)^2|\mathcal{F}_{i-1}]+\sum\limits^n_{i=n_1+1}E_{\theta}[\mathcal{D}_{i}(\theta)^2|\mathcal{F}_{i-1}]\nonumber\\
&=\sum\limits^{n_1}_{i=1}E_{\theta}\left\{\frac{(y_i-\eta(x_1,\theta))^2}{\sigma^4}[\dot{\eta}(x_1,\theta)]^2|\mathcal{F}_{i-1}\right\}+\sum\limits^n_{i=n_1+1}E_{\theta}\left\{\frac{(y_i-\eta(x_2,\theta))^2}{\sigma^4}[\dot{\eta}(x_2,\theta)]^2|\mathcal{F}_{i-1}\right\}\nonumber\\
&=\frac{{n}_{1}}{\sigma^2}[\dot{\eta}(x_1,\theta)]^2+\frac{{n}_{2}}{\sigma^2}[\dot{\eta}(x_2,\theta)]^2;
\end{align}
\begin{align}\label{iex}
I^{D}_n(\theta)&=E_{\theta}\{D_1(\theta)^2|\mathcal{F}_{0}\}+E_{\theta}\{D_2(\theta)^2|\mathcal{F}_{n_1}\}\nonumber\\
&=n_1^{2}E_{\theta}\left\{\frac{[\bar{y}_1-\eta(x_1,\theta)]^2}{\sigma^4}[\dot{\eta}(x_1,\theta)]^2|\mathcal{F}_{0}\right\}+n_2^{2}E_{\theta}\left\{\frac{[\bar{y}_2-\eta(x_2,\theta)]^2}{\sigma^4}[\dot{\eta}(x_2,\theta)]^2|\mathcal{F}_{n_1}\right\}\nonumber\\
&=\frac{{n}_{1}}{\sigma^2}[\dot{\eta}(x_1,\theta)]^2+\frac{{n}_{2}}{\sigma^2}[\dot{\eta}(x_2,\theta)]^2.
\end{align}
 Lemma \ref{cong:rv} provides convergence results for the random normings that are then used to obtain the desired standard normal limit for $\hat{\theta}_{n}$.
\begin{lemma}\label{cong:rv}
Under model \eqref{eq:model}, if $|\eta|<\infty$ and $|\dot{\eta}|<\infty$, as $n_2\rightarrow\infty$ with $n_1$ is fixed, 
\begin{enumerate}[label=(\arabic*)]
\item 
\begin{center} $n^{-1}j_n(\hat{\theta}_{n})\xrightarrow{p}U^{-2},$
\end{center}
\item 
\begin{center} $n^{-1}J^{\mathcal{D}}_n(\hat{\theta}_{n})\xrightarrow{p}U^{-2},$
\end{center}
\item 
\begin{center} $n^{-1}J^{D}_n(\hat{\theta}_{n})\xrightarrow{d}U^{-2}W,$
\end{center}
\item \begin{center} $n^{-1}I^{\mathcal{D}}_n(\hat{\theta}_{n})=n^{-1}I^{D}_n(\hat{\theta}_{n})\xrightarrow{p}U^{-2},$
\end{center}
\end{enumerate}
where $U^{-2}\equiv [\dot{\eta}(x_2,\theta)]^2\sigma^{-2}$ and 
$W\sim \chi^2(1)$.
\end{lemma}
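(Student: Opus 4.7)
The plan rests on three ingredients: (a) consistency of $\hat{\theta}_n$ as $n_2\to\infty$ with $n_1$ fixed; (b) the explicit closed forms (\ref{obs})--(\ref{iex}) for the four information measures; and (c) for part~(3), an algebraic identity coming from the fact that $\hat{\theta}_n$ solves the score equation.

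First I would establish consistency. From $S_n(\hat{\theta}_n)=0$, dividing by $n_2/\sigma^2$ shows that the first-stage contribution is $O(n_1/n_2)\to 0$, so $[\bar y_2-\eta(x_2,\hat{\theta}_n)]\dot\eta(x_2,\hat{\theta}_n)\xrightarrow{p} 0$. Because $x_2$ is $\mathcal{F}_{n_1}$-measurable, conditioning on $\mathcal{F}_{n_1}$ reduces the problem to the classical setting: the conditional law of large numbers yields $\bar y_2\to\eta(x_2,\theta)$, and together with $\dot\eta\neq 0$, $|\dot\eta|<\infty$, and continuity, this forces $\hat{\theta}_n\xrightarrow{p}\theta$.

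Parts (1), (2), and (4) then reduce to continuous-mapping arguments on the closed forms. For (1), dividing (\ref{obs}) by $n$ kills both $n_1/n$-terms; the fourth term, although of order $n_2/n\to 1$, is killed by $[\bar y_2-\eta(x_2,\hat{\theta}_n)]\xrightarrow{p} 0$ and bounded $\ddot\eta$, so only $(n_2/n)[\dot\eta(x_2,\hat{\theta}_n)]^2/\sigma^2\to U^{-2}$ survives. For (2), the first sum in (\ref{iobsub}) is $O(n_1/n)\to 0$; for the second, the conditional strong law gives $n_2^{-1}\sum_{i=n_1+1}^n[y_i-\eta(x_2,\theta)]^2\to\sigma^2$, and consistency plus continuity justifies substituting $\hat{\theta}_n$ for $\theta$ inside the summand. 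For (4), the expressions (\ref{iexsub})--(\ref{iex}) coincide and are already $\mathcal{F}_{n_1}$-measurable, so dividing by $n$, letting $n_2/n\to 1$, and applying continuous mapping delivers $U^{-2}$ directly.

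The part requiring real care is (3). Here the key step is to exploit the score identity: $D_1(\hat{\theta}_n)+D_2(\hat{\theta}_n)=S_n(\hat{\theta}_n)=0$, so $J_n^D(\hat{\theta}_n)=2D_1(\hat{\theta}_n)^2=2D_2(\hat{\theta}_n)^2$. I would then obtain the distributional limit by expanding $\bar y_2-\eta(x_2,\hat{\theta}_n)$ in a second-order Taylor series around $\theta$: the leading $\sigma u_2/\sqrt{n_2}$ is cancelled by $\dot\eta(x_2,\theta)(\hat{\theta}_n-\theta)$ via the score equation, leaving a residual driven by the first-stage standardized error $u_1\sim N(0,1)$, so that $W=u_1^2\sim\chi^2(1)$ emerges naturally. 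The main obstacle is the bookkeeping: one must track the second-order terms precisely enough to verify that the surviving scale in front of $W$ reduces to $U^{-2}=[\dot\eta(x_2,\theta)]^2/\sigma^2$, so that the $\dot\eta(x_1,\theta)$ and $n_1, n_2$ factors cancel in exactly the prescribed way. This delicate cancellation is the heart of the argument and the place where one earns the distributional (rather than merely probabilistic) limit in (3).
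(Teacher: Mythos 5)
Your treatment of parts (1), (2) and (4) follows essentially the same route as the paper: the first-stage contributions to \eqref{obs}, \eqref{iobsub}, \eqref{iexsub} and \eqref{iex} are $O_p(n_1/n)\to 0$ because $n_1$ is fixed, the weak law of large numbers (applied conditionally on $\mathcal{F}_{n_1}$, since $x_2$ is $\mathcal{F}_{n_1}$-measurable) handles the second-stage terms, and consistency of $\hat{\theta}_{n}$ plus continuity justifies substituting $\hat{\theta}_{n}$ for $\theta$. Your explicit consistency argument is more than the paper supplies and is welcome; note only that part (1) also implicitly requires $\ddot{\eta}$ bounded, a hypothesis not listed in the lemma but needed by the paper's proof as well.

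Part (3) is where your proposal breaks down, and in an instructive way. The score identity you invoke is correct: with probability tending to one $\hat{\theta}_{n}$ is an interior maximizer, so $D_1(\hat{\theta}_{n})=-D_2(\hat{\theta}_{n})$ and $J_n^{D}(\hat{\theta}_{n})=2D_1(\hat{\theta}_{n})^2$. But since $n_1$ is fixed and $\hat{\theta}_{n}\xrightarrow{p}\theta$, we have $D_1(\hat{\theta}_{n})=\tfrac{n_1}{\sigma^2}[\bar{y}_1-\eta(x_1,\hat{\theta}_{n})]\dot{\eta}(x_1,\hat{\theta}_{n})\xrightarrow{p}\tfrac{\sqrt{n_1}}{\sigma}u_1\dot{\eta}(x_1,\theta)$, an $O_p(1)$ random variable; hence $J_n^{D}(\hat{\theta}_{n})=O_p(1)$ and $n^{-1}J_n^{D}(\hat{\theta}_{n})\xrightarrow{p}0$. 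Equivalently, the cancellation you describe removes the entire $O_p(n_2^{-1/2})$ part of $\bar{y}_2-\eta(x_2,\hat{\theta}_{n})$ and leaves a residual of order $n_2^{-1}$, so $D_2(\hat{\theta}_{n})^2$ does not grow like $n$ and no $\chi^2(1)$ factor with scale $U^{-2}$ can survive: the ``delicate bookkeeping'' you defer cannot be made to produce the stated limit. The paper's own proof reaches $U^{-2}W$ by a different route: it works with the second term of \eqref{iobs} evaluated at the true $\theta$, where $n_2[\bar{y}_2-\eta(x_2,\theta)]^2\sigma^{-2}$ is exactly $\chi^2(1)$ for every $n_2$, and then replaces $\theta$ by $\hat{\theta}_{n}$ without further argument; your score-equation observation shows precisely why that substitution is not innocuous for this statistic (unlike in parts (1), (2), (4)), because $J_n^{D}$ is being evaluated at the point chosen to annihilate the score. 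As written, your argument for (3) is therefore internally inconsistent --- the identity you correctly derive contradicts the limit you claim to obtain --- and you would need either to prove the claim for $J_n^{D}(\theta)$ rather than $J_n^{D}(\hat{\theta}_{n})$, or to confront the degeneracy at $\hat{\theta}_{n}$ head on.
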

\begin{proof}
\begin{enumerate}[label=(\arabic*)]
\item 

The first two terms of equation \eqref{obs} go to $0$ when divided by $n$. Note in the forth term that
$$\bar{y}_2-\eta(x_2,\theta)\xrightarrow{p} E[y_{2,j}-\eta(x_2,\theta)]=E_{\bar{y}_1}\{E[y_{2,j}-\eta(x_2,\theta)|\bar{y}_1]\}=0,\quad j=1,\dots,n_2,$$
by the weak law of large numbers. As $n\to\infty$, $n_2/n\to 1$, and so $j_n(\theta)/n\xrightarrow{p}[\dot{\eta}(x_2,\theta)]^2\sigma^{-2}$, and $j_n(\hat{\theta}_{n})/n\xrightarrow{p}[\dot{\eta}(x_2,\theta)]^2\sigma^{-2}=U^{-2}$.

\item
The first term of equation \eqref{iobsub} tends to $0$ when divided by $n$. In the second term, by the weak law of large numbers,
\begin{align*}
& \sum\limits^n_{i=n_1+1}\frac{[y_i-\eta(x_2,\theta)]^2/n_2}{\sigma^2}\xrightarrow{p}
\frac{1}{\sigma^2}E_{\bar{y}_1}
E\left\{[y_{n_1+1}-\eta(x_2,\theta)]^2\ \vert \ \bar{y}_1\right\}=1. 
\end{align*}
As $n\to\infty$, and $n_2/n\to 1$, $J^{\mathcal{D}}_n(\theta)/n\xrightarrow{p}[\dot{\eta}(x_2,\theta)]^2\sigma^{-2}$, and $J^{\mathcal{D}}_n(\hat{\theta}_{n})/n\xrightarrow{p}[\dot{\eta}(x_2,\theta)]^2\sigma^{-2}=U^{-2}$.

\item
The first term of equation \eqref{iobs} goes to $0$ when divided by $n$. In the second term, 
$\sqrt{n_2}[\bar{y}_2-\eta(x_2,\theta)]/\sigma$
 is distributed as $N(0,1)$ for every $n_2$, so $n_2[\bar{y}_2-\eta(x_2,\theta)]^2\sigma^{-2}\sim \chi^2(1)$ as $n_2\rightarrow\infty$. And $[\dot{\eta}(x_2,\theta)]^2\sigma^{-2}$ is independent of $n_2[\bar{y}_2-\eta(x_2,\theta)]^2\sigma^{-2}$. 
Therefore, $$n^{-1}J^{D}_n(\hat{\theta}_{n})\xrightarrow{d}[\dot{\eta}(x_2,\theta)]^2\sigma^{-2}n_2[\bar{y}_2-\eta(x_2,\theta)]^2\sigma^{-2}=U^{-2}W$$ where $W\sim \chi^2(1)$  as $n_2\rightarrow\infty$ and $n_2/n\rightarrow1$.

\item

Again, the first term of equations \eqref{iex} and \eqref{iexsub} go to $0$ when divided by $n$. As $n\to\infty$, and $n_2/n\to 1$, $I^{\mathcal{D}}_n(\theta)/n=I^{D}_n(\theta)/n\xrightarrow{p}[\dot{\eta}(x_2,\theta)]^2\sigma^{-2}$, and evaluated at $\theta=\hat{\theta_{n}}$, this limit is 
$U^{-2}$.
\end{enumerate}
\end{proof}

\subsection{The Generalized Cram{\'e}r-Slutzky theorem and Its Application}
Now we introduce the Generalized Cram{\'e}r-Slutzky Theorem in order to obtain main theoretical results in Theorem \ref{thm:main}, that is, to obtain  standard normal limits for $\hat{\theta}_{n}$ using  random norms. According to Lemma \ref{cong:rv}, the observed information $j_n(\theta)$, the stage-wise and subject-wise incremental expected information $I^{D}_n(\theta)$, $I^{\mathcal{D}}_n(\theta)$, and the subject-wise incremental observed information $J^{\mathcal{D}}_n(\theta)$ can be applied to normalize the MLE by the generalized Cram{\'e}r-Slutzky theorem, while the stage-wise incremental observed information $J^{D}_n(\theta)$ cannot.
\begin{lemma}{The Generalized Cram{\'e}r-Slutzky Theorem} \citep{aldous:1978}\label{con:function}
Suppose that $X_n\xrightarrow{d}X\quad \mathcal{G}-stably$. Let $g(x,y)$ be a continuous function of two variables, if $Y_{n}\xrightarrow{p}Y$, where $Y$ is a $\mathcal{G}$-measurable random variable. Then
$$g(X_n,Y_{n})\xrightarrow{d}g(X,Y)\quad \mathcal{G}-stably.$$ 
\end{lemma}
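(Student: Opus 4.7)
The plan is to decompose this into two conceptual pieces: first upgrade the hypotheses to joint stable convergence $(X_n,Y_n)\to(X,Y)$ $\mathcal{G}$-stably, and then invoke the continuous mapping principle for stable convergence (which is essentially immediate from the definition). The fact that $Y$ is $\mathcal{G}$-measurable is what lets us push $Y$ (or a close approximation of it) through expectations against bounded continuous functions of $X_n$, because the resulting weights lie in $\mathcal{G}$ and hence are legal under the definition in \eqref{def:stable}.

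For the joint step, fix an arbitrary $E\in\mathcal{G}$ with $P(E)>0$. First I would show $(X_n,Y)\to(X,Y)$ in distribution under $P_E$, which is the crux. It suffices to check $E_{P_E}[\phi(X_n,Y)]\to E_{P_E}[\phi(X,Y)]$ for every bounded, uniformly continuous $\phi:\mathbb{R}^2\to\mathbb{R}$. Approximate $Y$ by a $\mathcal{G}$-measurable simple function $Y_\delta=\sum_k y_k\,\mathbf{1}_{A_k}$ with $A_k\in\mathcal{G}$ disjoint and $|Y-Y_\delta|<\delta$ outside a set of small probability. Then
\[
E_{P_E}\bigl[\phi(X_n,Y_\delta)\bigr]=\frac{1}{P(E)}\sum_{k}E\bigl[\phi(X_n,y_k)\,\mathbf{1}_{A_k\cap E}\bigr],
\]
and for each $k$ with $P(A_k\cap E)>0$, since $A_k\cap E\in\mathcal{G}$ and $x\mapsto\phi(x,y_k)$ is bounded continuous, $\mathcal{G}$-stable convergence of $X_n$ under $P_{A_k\cap E}$ yields convergence of that summand to the analogous quantity with $X$ in place of $X_n$. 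Summing over $k$ gives $E_{P_E}[\phi(X_n,Y_\delta)]\to E_{P_E}[\phi(X,Y_\delta)]$; uniform continuity of $\phi$ then lets us send $\delta\to 0$ to conclude $E_{P_E}[\phi(X_n,Y)]\to E_{P_E}[\phi(X,Y)]$. To promote $Y$ to $Y_n$, note that $P_E(|Y_n-Y|>\epsilon)\le P(|Y_n-Y|>\epsilon)/P(E)\to 0$, so $Y_n-Y\to 0$ in probability under $P_E$; combining with the joint convergence just obtained, the classical Cramér-Slutzky theorem applied under $P_E$ yields $(X_n,Y_n)\xrightarrow{d}(X,Y)$ under $P_E$.

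The closing step is routine: for any bounded continuous $\psi:\mathbb{R}\to\mathbb{R}$ the composition $(x,y)\mapsto\psi(g(x,y))$ is bounded continuous, so from the joint convergence under every $P_E$ we immediately obtain $E_{P_E}[\psi(g(X_n,Y_n))]\to E_{P_E}[\psi(g(X,Y))]$, which is stable convergence $g(X_n,Y_n)\to g(X,Y)$ $\mathcal{G}$-stably. The main obstacle is the joint convergence $(X_n,Y)\to(X,Y)$ under $P_E$: one has to resist the temptation to condition on $Y$ (which need not be independent of anything), and instead exploit that $Y$ is $\mathcal{G}$-measurable via the simple-function approximation so that the resulting indicator weights stay inside $\mathcal{G}$ and stable convergence of $X_n$ can be invoked termwise. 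The rest is packaging via Slutsky and the continuous mapping theorem.
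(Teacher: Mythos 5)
Your proof is correct. Note that the paper itself offers no proof of this lemma --- it is stated with a citation to Aldous (1978) --- so there is no in-paper argument to compare against; your write-up supplies what the paper omits. The route you take (first establish joint stable convergence $(X_n,Y_n)\to(X,Y)$ $\mathcal{G}$-stably, then apply continuous mapping) is the standard one in the stable-convergence literature, e.g.\ H\"ausler and Luschgy (2015), and the key device --- approximating the $\mathcal{G}$-measurable limit $Y$ by a simple function whose level sets lie in $\mathcal{G}$, so that each weighted term $E[\phi(X_n,y_k)\mathbf{1}_{A_k\cap E}]$ can be handled by invoking the definition of stable convergence under $P_{A_k\cap E}$ --- is exactly the right one, and you correctly identify that $\mathcal{G}$-measurability of $Y$ is where the hypothesis earns its keep. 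Two small points of hygiene: the interchange of limit and sum requires the simple approximation to take finitely many values (achieved by truncating $Y$ first and absorbing the tail into your ``set of small probability''), which your phrasing accommodates but is worth making explicit; and the final reduction from $Y$ to $Y_n$ via Slutsky under $P_E$ is fine precisely because $P_E(\cdot)\le P(\cdot)/P(E)$ preserves convergence in probability, as you note.
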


\begin{theorem}\label{thm:main}
Under model \eqref{eq:model}, $$j_n(\hat{\theta}_{n})^{1/2}(\hat{\theta}_{n}-\theta)\xrightarrow{d}N(0,1)\quad (mixing),$$ 
$$J^{\mathcal{D}}_n(\hat{\theta}_{n})^{1/2}(\hat{\theta}_{n}-\theta)\xrightarrow{d}N(0,1)\quad (mixing),$$
$$I^{D}_n(\hat{\theta}_{n})^{1/2}(\hat{\theta}_{n}-\theta)=I^{\mathcal{D}}_n(\hat{\theta}_{n})^{1/2}(\hat{\theta}_{n}-\theta)\xrightarrow{d}N(0,1)\quad (mixing),$$
as $n_2\rightarrow\infty$ with $n_1$  fixed.
\end{theorem}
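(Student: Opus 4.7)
The plan is to apply the Generalized Cramér–Slutzky Theorem (Lemma~\ref{con:function}) to the product decomposition
\[
\mathrm{Norm}_n^{1/2}(\hat{\theta}_n-\theta)
=\bigl(\mathrm{Norm}_n/n\bigr)^{1/2}\cdot\sqrt{n}\,(\hat{\theta}_n-\theta),
\]
where $\mathrm{Norm}_n$ is any of the three random information measures $j_n(\hat{\theta}_n)$, $J^{\mathcal{D}}_n(\hat{\theta}_n)$, and $I^{\mathcal{D}}_n(\hat{\theta}_n)=I^{D}_n(\hat{\theta}_n)$. The two factors are handled by the two preceding lemmas, one supplying stable convergence of the second factor and the other supplying convergence in probability to an $\mathcal{F}_{n_1}$-measurable limit for the first factor.

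First, Lemma~\ref{stable} furnishes $\sqrt{n}(\hat{\theta}_n-\theta)\xrightarrow{d}UZ$ $\mathcal{F}_{n_1}$-stably, with $U=\sigma/\dot\eta(x_2,\theta)$ a function of $\bar y_1$ through $x_2=x_2(\bar y_1)$ (hence $\mathcal{F}_{n_1}$-measurable), and $Z\sim N(0,1)$ independent of $\mathcal{F}_{n_1}$. Next, Lemma~\ref{cong:rv}, parts (1), (2), and (4), yields $\mathrm{Norm}_n/n\xrightarrow{p}U^{-2}$ for the three normings in question; note that part (3) is excluded precisely because $J^{D}_n$ converges to $U^{-2}W$ rather than to a constant multiple of $U^{-2}$. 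The continuous mapping theorem then gives $(\mathrm{Norm}_n/n)^{1/2}\xrightarrow{p}|U|^{-1}$, which is $\mathcal{F}_{n_1}$-measurable.

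With these two ingredients in hand, I take $g(x,y)=xy$ in Lemma~\ref{con:function}, with $X_n=\sqrt{n}(\hat{\theta}_n-\theta)$ converging $\mathcal{F}_{n_1}$-stably to $UZ$ and $Y_n=(\mathrm{Norm}_n/n)^{1/2}$ converging in probability to the $\mathcal{F}_{n_1}$-measurable limit $|U|^{-1}$. The theorem delivers
\[
\mathrm{Norm}_n^{1/2}(\hat{\theta}_n-\theta)\xrightarrow{d} |U|^{-1}\cdot UZ=\mathrm{sgn}(U)\,Z\quad\mathcal{F}_{n_1}\text{-stably}.
\]
Since $Z$ is independent of $\mathcal{F}_{n_1}$ and symmetric about $0$, conditional on $\mathcal{F}_{n_1}$ the product $\mathrm{sgn}(U)\,Z$ is still $N(0,1)$; thus the limit has the standard normal distribution and is independent of $\mathcal{F}_{n_1}$, which upgrades the stable convergence to mixing convergence, as required.

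The proof is therefore almost entirely an assembly of the two lemmas via Lemma~\ref{con:function}; no new analytic estimates are needed. The only subtle point, and the item I would take most care with, is the sign issue: the square-root normings are non-negative, whereas $U=\sigma/\dot\eta(x_2,\theta)$ can have either sign depending on $\dot\eta(x_2,\theta)$, so after the continuous mapping step one is left with a factor $\mathrm{sgn}(U)$ that must be absorbed using the symmetry of $Z$ and its independence of $\mathcal{F}_{n_1}$. Once that is observed, the standard normal limit and the mixing property both follow immediately.
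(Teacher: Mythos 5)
Your proposal is correct and follows essentially the same route as the paper's proof: the same decomposition into $(\mathrm{Norm}_n/n)^{1/2}\cdot\sqrt{n}(\hat{\theta}_n-\theta)$, the same appeal to Lemma~\ref{stable} and Lemma~\ref{cong:rv} (parts (1), (2), (4)), and the same application of the Generalized Cram{\'e}r--Slutzky theorem, the only cosmetic difference being that you take $g(x,y)=xy$ after a continuous-mapping step while the paper takes $g(x,y)=x\sqrt{y}$ directly. In fact you are slightly more careful than the paper on one point: the paper writes $\sqrt{Y_n}\xrightarrow{p}U^{-1}$, which tacitly assumes $U>0$, whereas you correctly obtain $|U|^{-1}$ and absorb the resulting $\mathrm{sgn}(U)$ factor using the symmetry of $Z$ and its independence of $\mathcal{F}_{n_1}$.
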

\begin{proof}
Defining $g(x,y)=x\sqrt{y}$,  $g(x,y)$ is continuous function of two variables when $y>0$. Let $X_{n}=\sqrt{n}(\hat{\theta}_{n}-\theta)$ and $ Y_{n}=n^{-1}j_n(\hat{\theta}_{n})$.   Then $X_{n}\xrightarrow{d}UZ\quad \mathcal{F}_{n_1}-stably$ and $\ Y_{n}\xrightarrow{p}U^{-2}$. 
Because $U^{-2}=[\dot{\eta}(x_2,\theta)]^2\sigma^{-2}>0\ a.s$, $\sqrt{Y_{n}}\xrightarrow{p}U^{-1}$. Now by Lemma \ref{con:function}, $$g(X_{n},Y_{n})=\sqrt{n}(\hat{\theta}_{n}-\theta)n^{-1/2}j_n(\hat{\theta}_{n})^{1/2}=j_n(\hat{\theta}_{n})^{1/2}(\hat{\theta}_{n}-\theta)\xrightarrow{d}UZU^{-1}=Z\quad \mathcal{F}_{n_1}-stably.$$
Since $Z$ is independent of $\mathcal{F}_{n_1}$, 
$$j_n(\hat{\theta}_{n})^{1/2}(\hat{\theta}_{n}-\theta)\xrightarrow{d}N(0,1)\quad (mixing).$$
Similarly,  
$$J^{\mathcal{D}}_n(\hat{\theta}_{n})^{1/2}(\hat{\theta}_{n}-\theta)\xrightarrow{d}N(0,1)\quad (mixing),$$
$$I^{D}_n(\hat{\theta}_{n})^{1/2}(\hat{\theta}_{n}-\theta)=I^{\mathcal{D}}_n(\hat{\theta}_{n})^{1/2}(\hat{\theta}_{n}-\theta)\xrightarrow{d}N(0,1)\quad (mixing).$$
\end{proof}

\section{Adaptive Optimal Design Examples}
In this section, we apply Theorem \ref{thm:main} to normalize MLEs following an adaptive optimal design under logistic and  exponential (location and scale) regression models. Then we compare their tail probabilities and the difference between cumulative distribution functions using random norms and the Fisher information.  
For all models, the dose in the first stage is fixed at $x_1=2$, while the dose for stage~2 is selected from the range $x_2\in[a,b]=[0.25,4]$ based on stage~1 data. The divergence of the MLE of $\theta$ to infinity necessitates restricting  the search to  some finite interval  $(\underline{\theta},\overline{\theta})$; for simplicity throughout this section, we assume $\theta\in(0,1/a)$. All simulations assume the true parameter $\theta=1$ and known variance $\sigma^2=0.5^2$.

 The stage-two dose that maximizes the increase in information on the unknown parameter is 
\begin{align}\label{x2}
x_2(\theta)=\arg\max\limits_{x_2\in[a,b]}\ \frac{1}{n_2}Var(S_{n}-S_{n_{1}})=\arg\max\limits_{x_2\in[a,b]}\ [\dot{\eta}(x_2,\theta)]^2.\end{align}
 The two-stage adaptive optimal design  is $\xi_{A}=\{[w_{1},x_{1}],[w_{2},\hat{x}_{2}]\}$, where 
$\hat{x}_{2}$ is selected adaptively as given by \eqref{x2}, i.e., 
\begin{align*}
 \hat{x}_{2}=\begin{cases}a &\textrm{if }x_{2}(\hat{\theta}_{n_1})\le a;\\[-4pt] 
x_{2}(\hat{\theta}_{n_1}) &\textrm{if }x_{2}(\hat{\theta}_{n_1})\in [a,b];\\[-4pt]
b &\textrm{if }x_{2}(\hat{\theta}_{n_1})\ge b;
\end{cases}
\end{align*} 
and $w_{i}=n_{i}/n$.
 For each model, we evaluate the MLE norms' performance for several fixed values of $n_1$, including a   \emph{locally optimal stage~1 sample size}  \citep{lane:flou:info:2014}:   $$n_{1}^{*}(\theta)\equiv\arg\max\limits_{n_{1}\in\{1,...,n\}}i(\xi_{A},\theta),$$  
 where the notation $i(\theta)\equiv i(\xi_{A},\theta) $ makes Fisher information's dependence on the design explicit. To provide an ideal benchmark,  $n_{1}^{*}(\theta)$ is evaluated at the true value of $\theta$ for all models. A practical method to approximate the locally optimal stage~1 sample size is discussed by \citet{lane:flou:info:2014}.

\subsection{Logistic Regression Models}
 We explore the sample size needed to obtain the normal tail probabilities for the location parameter and scale parameter  logistic regression models, separately.
 
\subsubsection{The Logistic-Location Model}
Consider the  Logistic-Location Model with independent normal errors:
$$y=[1+e^{x-\theta}]^{-1}+\epsilon,\quad\epsilon\sim N(0,\sigma^2), \quad x\in[a,b],\quad -\infty < a<b<\infty$$ 
Maximizing the first-stage likelihood function,
$$\mathcal{L}_{n_1}(\theta|y_{11},\ldots,y_{1,n_1})
\propto\exp\left\{-\tfrac{n_1}{2\sigma^2}[\bar{y}_{1}-(1+e^{x_1-\theta})^{-1}]^2\right\},$$
yields the MLE:
\[\hat{\theta}_{n_1} = \left\{ \begin{array}{ll}
x_1+\textrm{logit}(\bar{y}_{1}) & \mbox{if $\bar{y}_{1}\in[0,(1+e^{x-\bar{\theta}})^{-1}]$}, \\0 & \mbox{if $\bar{y}_{1}\leq 0$}, \\ \bar{\theta} & \mbox{if $\bar{y}_{1}\geq (1+e^{x-\bar{\theta}})^{-1}$}.
\end{array}\right.\]
Adaptively selecting the second-stage dose to be
\[\hat{x}_{2} = \left\{ \begin{array}{ll}
\hat{\theta}_{n_1}  & \mbox{if $\bar{y}_{1}\in[(1+e^{x-a})^{-1},(1+e^{x-b})^{-1}]$}, \\b & \mbox{if $\bar{y}_{1}\geq (1+e^{x-b})^{-1}$}, \\ a & \mbox{if $\bar{y}_{1}\leq (1+e^{x-a})^{-1}$},
\end{array}\right.\]
the likelihood given data from both stages is
$$\mathcal{L}_n(\theta|y_{11},\ldots,y_{1,n_1},y_{21},\ldots,y_{2,n_2})
\propto\exp\left\{-\tfrac{n_1}{2\sigma^2}[\bar{y}_{1}-(1+e^{x_1-\theta})^{-1}]^2
-\tfrac{n_2}{2\sigma^2}[\bar{y}_{2}-(1+e^{x_{2}(\bar{y}_{1})-\theta})^{-1}]^2\right\},$$
and the MLE based on all data is 
\[\hat{\theta}_{n} = \left\{ \begin{array}{ll}
\theta_{n}^{'}  & \mbox{if $\theta_{n}^{'}\in(0,1/a)$}, \\0 & \mbox{if $\theta_{n}^{'}\leq 0$}, \\ 1/a & \mbox{if $\theta_{n}^{'}\geq 1/a$},
\end{array}\right.\]
where $\theta_{n}^{'}$ maximizes 
$\mathcal{L}_n(\theta|y_{11},\ldots,y_{1,n_1},y_{21},\ldots,y_{2,n_2}).$
The average Fisher information given data from both stages is
\begin{multline*}
\frac{1}{n}i(\xi_{A},\theta)=\frac{1}{\sigma^2}\bigg\{w_{1}(1+e^{x_1-\theta})^{-4}e^{2(x_1-\theta)}+w_{2}\pi_{a}(1+e^{a-\theta})^{-4}e^{2(a-\theta)}+w_{2}\pi_{b}(1+e^{b-\theta})^{-4}e^{2(b-\theta)}\\
+w_{2}E_{\bar{y}_{1}}\big[(1+e^{x_2-\theta})^{-4}e^{2(x_2-\theta)}I((1+e^{x_2-a})^{-1}<\bar{y}_{1}<(1+e^{x_2-b})^{-1})\big]\bigg\},
\end{multline*}
where $\pi_{a}=\Phi\{\sqrt{n_1}[(1+e^{x_2-a})^{-1}-(1+e^{x_2-\theta})^{-1}]/\sigma\}$ and 
$\pi_{b}=1-\Phi\{\sqrt{n_1}[(1+e^{x_2-b})^{-1}-(1+e^{x_2-\theta})^{-1}]/\sigma\}$ are the probabilities that $\hat{x}_{2}$ falls on the boundaries $a$ and $b$, respectively. 

According to functions \eqref{obs}, \eqref{iobsub} and \eqref{iex}, respectively,  the observed information is
\begin{multline*}
j_n(\theta)
=\frac{n_1}{\sigma^2}(1+e^{x_1-\theta})^{-4}e^{2(x_1-\theta)}
-\frac{n_1}{\sigma^2}\left[\bar{y}_1-(1+e^{x_1-\theta})^{-1}\right](1+e^{x_1-\theta})^{-3}(e^{x_1-\theta}-1)\\
+\frac{n_2}{\sigma^2}(1+e^{x_2-\theta})^{-4}e^{2(x_2-\theta)}
-\frac{n_2}{\sigma^2}\left[\bar{y}_2-(1+e^{x_2-\theta})^{-1}\right](1+e^{x_2-\theta})^{-3}(e^{x_2-\theta}-1);
\end{multline*}
the subject-wise incremental observed information is
\begin{align*}
J_n^{\mathcal{D}}(\theta)
&=\sum\limits^{n_1}_{i=1}\frac{[y_i-(1+e^{x_1-\theta})^{-1}]^2}{\sigma^4}(1+e^{x_1-\theta})^{-4}e^{2(x_1-\theta)}+\sum\limits^n_{i=n_1+1}\frac{[y_i-(1+e^{x_2-\theta})^{-1}]^2}{\sigma^4}(1+e^{x_2-\theta})^{-4}e^{2(x_2-\theta)};
\end{align*}
and the stage-wise and subject-wise incremental expected information are
\begin{align*}
I^{D}_n(\theta)=I^{\mathcal{D}}_n(\theta)
&=\frac{{n}_{1}}{\sigma^2}(1+e^{x_1-\theta})^{-4}e^{2(x_1-\theta)}+\frac{{n}_{2}}{\sigma^2}(1+e^{x_2-\theta})^{-4}e^{2(x_2-\theta)}.
\end{align*}

Table \ref{t1} provides  tail probabilities of MLE $\hat{\theta}_{n}$ normalized by the four different information measures when the first-stage sample size  is fixed at $n_1=30$ and the total sample size is $n=\{100,200,400\}$. In each scenario, \num[group-separator={,}]{10000} data sets are generated. When normalized by the expected (Fisher) information, MLEs' tail probabilities are far from the nominal ones in all situations. Random  information measures perform better in all scenarios, and as $n$ increases, the tail probabilities are closer to nominal ones. When $n=400$, the tail probabilities of MLE normalized by random information are almost the same as the nominal ones. 

Table \ref{t2} also provides tail probabilities of MLEs normalized by different information measures, but now  the first stage sample size is $n_{1}^{*}$ with total sample size $n=\{100,200,400\}$. The performance of the observed information does not change too much when compared with results in Table \ref{t1}, while other information measures (including the expected information) perform better. Still random information measures perform better than the Fisher information in all scenarios. Of course, as $n$ increases, the tail probabilities   normalized by all information measures are closer to the nominal one. When $n=400$ and $n_1=30$, the tail probabilities obtained with random information measures are rather close to nominal ones.

Figure \ref{f1} shows the integrated absolute difference between the cumulative distribution functions (CDF) of standard normal distribution and 
the  CDFs of MLE normalized by each of the four information measures when $n_1=n_{1}^{*}$. The integrated absolute difference between the CDFs of the t-distribution with 60 degrees of freedom and the standard normal distribution is also graphed to provide a sense of scale. One can see that normalizing MLEs with the random information measures brings them closer to the standard normal than  normalizing with the expected information. Moreover, the observed information performs best in terms of the integrated absolute difference when the total sample size $n$ is small. Of course, the distribution of MLEs with all normings becomes closer to the standard normal distribution  as $n$ increases.

\begin{table}[h!]
\caption{Tail probabilities of the MLE normalized with four different information measures under the logistic-location model}
\vspace*{0.1in}
\centering
{
\begin{tabular}{lccccc}
\hline\hline
 \multicolumn{1}{c}{$n_1/n$}  & \multicolumn{1}{c}{Information Measure} & \multicolumn{4}{c}{Left Tail / Right Tail} \\
 \hline
& Nominal &0.005/0.005 &	0.025/0.025 &	0.050/0.050 &	0.100/0.100 \\
\hline

& $i(\xi_{A},\theta)$ & 0.005/0.007 & 0.024/0.032 & 0.052/0.058 & 0.104/0.109 \\
{$30/100$}
& $j_n(\theta)$  & 0.004/0.006 & 0.023/0.030 & 0.049/0.055 & 0.098/0.106 \\
& $J_n^{\mathcal{D}}(\theta)$ & 0.004/0.007 & 0.022/0.029 & 0.047/0.052 & 0.095/0.103 \\
& $I_n^{\mathcal{D}}(\theta)/I_n^{D}(\theta)$ & 0.004/0.006 & 0.022/0.030 & 0.048/0.053 & 0.097/0.104 \\
\hline
& $i(\xi_{A},\theta)$ & 0.008/0.007 & 0.030/0.030 & 0.058/0.056 & 0.110/0.106 \\
{$30/200$}
& $j_n(\theta)$  & 0.006/0.006 & 0.025/0.026 & 0.051/0.050 & 0.101/0.099 \\
& $J_n^{\mathcal{D}}(\theta)$ & 0.006/0.006 & 0.025/0.025 & 0.050/0.050 & 0.099/0.098 \\
& $I_n^{\mathcal{D}}(\theta)/I_n^{D}(\theta)$ & 0.006/0.006 & 0.025/0.025 & 0.051/0.049 & 0.100/0.098 \\
\hline
& $i(\xi_{A},\theta)$ & 0.008/0.006 & 0.032/0.028 & 0.058/0.057 & 0.112/0.108 \\
{$30/400$}
& $j_n(\theta)$  & 0.006/0.004 & 0.026/0.024 & 0.050/0.050 & 0.102/0.100 \\
& $J_n^{\mathcal{D}}(\theta)$ & 0.006/0.004 & 0.027/0.024 & 0.050/0.049 & 0.100/0.099 \\
& $I_n^{\mathcal{D}}(\theta)/I_n^{D}(\theta)$ & 0.006/0.004 & 0.027/0.024 & 0.050/0.049 & 0.102/0.099 \\
\hline
\end{tabular}
}\vspace{12pt}
\label{t1}

\caption{Tail probabilities of the MLE normalized with four different information measures under the logistic-location model}
\vspace*{0.1in}
\centering
{
\begin{tabular}{lccccc}
\hline\hline
 \multicolumn{1}{c}{$n_1^{*}/n$}  & \multicolumn{1}{c}{Information Measure} & \multicolumn{4}{c}{Left Tail / Right Tail} \\
  \hline
& Nominal &0.005/0.005 &	0.025/0.025 &	0.050/0.050 &	0.100/0.100 \\
\hline
& $i(\xi_{A},\theta)$ & 0.013/0.010 & 0.035/0.035 & 0.060/0.060 & 0.111/0.111\\
{$17/100$}
& $j_n(\theta)$  & 0.005/0.008 & 0.025/0.030 & 0.050/0.055 & 0.098/0.105 \\
& $J_n^{\mathcal{D}}(\theta)$ & 0.004/0.008 & 0.031/0.029 & 0.054/0.051 & 0.103/0.099 \\
& $I_n^{\mathcal{D}}(\theta)/I_n^{D}(\theta)$ & 0.005/0.007 & 0.030/0.030 & 0.055/0.054 & 0.104/0.102 \\
\hline
& $i(\xi_{A},\theta)$ & 0.010/0.009	& 0.031/0.033	& 0.057/0.061	& 0.112/0.113 \\
{$23/200$}
& $j_n(\theta)$  & 0.005/0.006	& 0.024/0.029	& 0.047/0.054	& 0.100/0.105 \\
& $J_n^{\mathcal{D}}(\theta)$ & 0.009/0.006	& 0.028/0.028	& 0.050/0.053	& 0.101/0.102 \\
& $I_n^{\mathcal{D}}(\theta)/I_n^{D}(\theta)$ & 0.008/0.006	& 0.027/0.028	& 0.050/0.054	& 0.102/0.104 \\
\hline
& $i(\xi_{A},\theta)$ & 0.008/0.006 & 0.032/0.028 & 0.058/0.057 & 0.112/0.108 \\
{$30/400$}
& $j_n(\theta)$  & 0.006/0.004 & 0.026/0.024 & 0.050/0.050 & 0.102/0.100 \\
& $J_n^{\mathcal{D}}(\theta)$ & 0.006/0.004 & 0.027/0.024 & 0.050/0.049 & 0.100/0.099 \\
& $I_n^{\mathcal{D}}(\theta)/I_n^{D}(\theta)$ & 0.006/0.004 & 0.027/0.024 & 0.050/0.049 & 0.102/0.099 \\
\hline
\end{tabular}
}
\label{t2}
\end{table}

\begin{figure*}[h!]
\centerline{\includegraphics[height=.4\textheight]{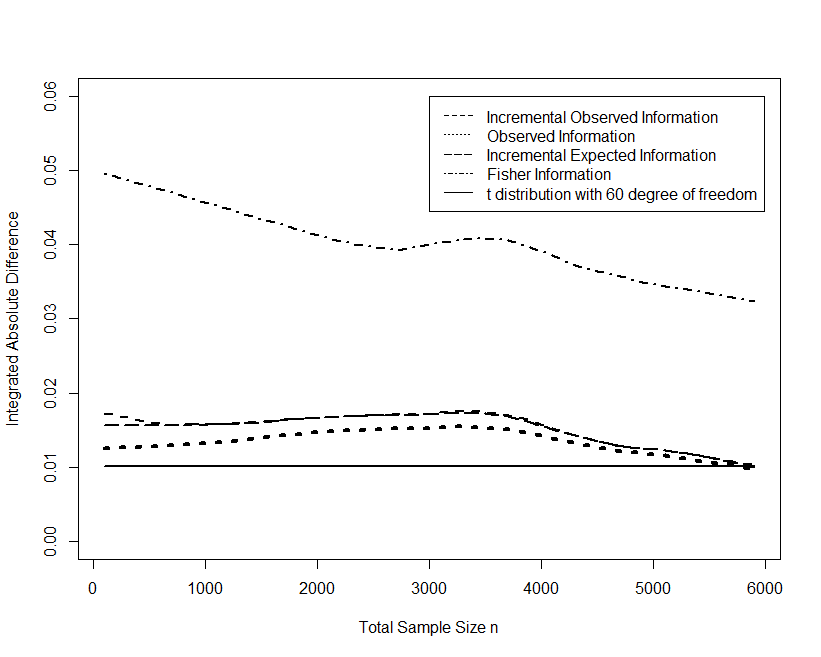}}
\caption{Integrated absolute differences  between the cumulative distributions of MLEs (1) normalized by four information measures and (2) the $t$-distribution with 60 degrees of freedom, and the cumulative distribution of standard normal distribution as $n$ increases from $100$ to \num[group-separator={,}]{6000} with $n_1=n_{1}^{*}$ under the logistic-location model.}
\label{f1}
\end{figure*}

\subsubsection{The Logistic-Scale  Model}
 To explore whether  the relative performance of different  information norms  differs between scale and location parameter estimates, now consider the Logistic-Scale Model:
\begin{align}\label{modleb}
 y=[1+e^{\theta x}]^{-1}+\epsilon,\quad\epsilon\sim N(0,\sigma^2),\quad x\in[a,b],\quad -\infty < a<b<\infty,
 \end{align}
 with first-stage  likelihood  
$\mathcal{L}_{n_1}(\theta|y_{11})
\propto\exp\left\{-\tfrac{n_1}{2\sigma^2}[\bar{y}_{1}-(1+e^{\theta x_1})^{-1}]^2\right\}.$
Because $\theta\in(0,\bar{\theta})$, maximizing the first-stage likelihood function yields the MLE
\[\hat{\theta}_{n_1} = \left\{ \begin{array}{ll}
\frac{-\textrm{logit}(\bar{y}_{1})}{x_1} & \mbox{if $\bar{y}_{1}\in\big((1+e^{\bar{\theta}x})^{-1},1/2\big)$}, \\0 & \mbox{if $\bar{y}_{1}\geq 1/2$}, \\ \bar{\theta} & \mbox{if $\bar{y}_{1}\leq (1+e^{\bar{\theta}x})^{-1}$}.
\end{array}\right.\]
The adaptively selected second-stage dose when $x_2(\hat\theta)\in[a,b]$ is
$$\hat{x}_{2}=\arg\max\big\{[\tfrac{d}{d\theta}\eta(x_{2},\theta)]^2|_{\theta=\hat{\theta}_{n_1}}\big\},$$
where 
$[\tfrac{d}{d\theta}\eta(x_{2},\theta)]^2=x_2^2e^{2\theta x_2}(1+e^{\theta x_2})^{-4}.$ The equation
$\frac{d}{dx_2}\left\{[\tfrac{d}{d\theta}\eta(x_{2},\theta)]^2|_{\theta=\hat{\theta}_{n_1}}\right\}=0$ has
no analytic solution 
for $x_2$, so we find it using  numerical methods. The MLE using all data can be found analytically: 
\[\hat{\theta}_{n} = \left\{ \begin{array}{ll}
\theta_{n}^{'}  & \mbox{if $\theta_{n}^{'}\in(0,1/a)$}, \\0 & \mbox{if $\theta_{n}^{'}\leq 0$}, \\ 1/a & \mbox{if $\theta_{n}^{'}\geq 1/a$},
\end{array}\right.\]
where $\theta_{n}^{'}$ maximizes
$$\mathcal{L}_n(\theta|y_{11},\ldots,y_{1,n_1},y_{21},\ldots,y_{2,n_2})
\propto\exp\left\{-\tfrac{n_1}{2\sigma^2}[\bar{y}_{1}-(1+e^{\theta x_1})^{-1}]^2
-\tfrac{n_2}{2\sigma^2}[\bar{y}_{2}-(1+e^{\theta x_{2}(\bar{y}_{1})})^{-1}]^2\right\}.$$
The Fisher information is derived numerically because we do not have an explicit function for $x_2(\bar{y}_{1})$. However, we can still derive the random information measures analytically. The observed information is
\begin{multline*}
j_n(\theta)
=\frac{n_1}{\sigma^2}x_1^2e^{2\theta x_1}(1+e^{\theta x_1})^{-4}
-\frac{n_1}{\sigma^2}\left[\bar{y}_1-(1+e^{\theta x_1})^{-1}\right](1+e^{\theta x_1})^{-3}(e^{\theta x_1}-1)e^{\theta x_1}x_1^2\\
+\frac{n_2}{\sigma^2}x_2^2e^{2\theta x_2}(1+e^{\theta x_2})^{-4}
-\frac{n_2}{\sigma^2}\left[\bar{y}_2-(1+e^{\theta x_2})^{-1}\right](1+e^{\theta x_2})^{-3}(e^{\theta x_2}-1)e^{\theta x_2}x_2^2;
\end{multline*}
the subject-wise incremental observed information is
\begin{align*}
J_n^{\mathcal{D}}(\theta)
&=\sum\limits^{n_1}_{i=1}\frac{[y_i-(1+e^{\theta x_1})^{-1}]^2}{\sigma^4}x_1^2e^{2\theta x_1}(1+e^{\theta x_1})^{-4}+\sum\limits^n_{i=n_1+1}\frac{[y_i-(1+e^{\theta x_2})^{-1}]^2}{\sigma^4}x_2^2e^{2\theta x_2}(1+e^{\theta x_2})^{-4};
\end{align*}
and the stage-wise and subject-wise incremental expected information are
\begin{align*}
I^{D}_n(\theta)=I^{\mathcal{D}}_n(\theta)
&=\frac{{n}_{1}}{\sigma^2}x_1^2e^{2\theta x_1}(1+e^{\theta x_1})^{-4}+\frac{{n}_{2}}{\sigma^2}x_2^2e^{2\theta x_2}(1+e^{\theta x_2})^{-4}.
\end{align*}

Table~\ref{t3} presents the tail probabilities of the MLE  $\hat{\theta}_{n}$ normalized by these four different information measures under Model \ref{modleb}. First, comparing with the same scenario in Table \ref{t3}, we note that the  tail probabilities are much farther from the nominal ones
when the first-stage sample~size is fixed at $n_1=20$ and $n=400$.  This indicates that the scale-parameter estimators converge more slowly than the location-parameter estimators. Because estimators perform rather poorly when $n_1$ is small regarding of their normalizing measure, Table~\ref{t3} also shows tail probabilities  for larger sample sizes, namely, $n_1=100$ and $n=\{2000, 3000\}$. In both cases, random information measures perform better than the expected information. As $n$ increases, the tail probabilities of the MLE normalized by random information measures are closer to the nominal ones. When $n=3000$, the tail probabilities of the MLE normalized by random information measures are almost identical to nominal ones.

The tail probabilities of MLEs normalized by different information measures when  the first stage sample size is $n_{1}^{*}$, and $n=\{400,2000,3000\}$ are presented in Table \ref{t4}. The performance of random information measures does not change  much when compared with results in Table \ref{t3}, while the expected information perform better. However, random information measures still perform as good as the expected information in both $n=2000$ and $n=3000$ cases. Still, as $n$ increases, the tail probabilities  normalized by all information measures are closer to the nominal one. When $n=3000$ and $n_1=434$, the tail probabilities obtained with all information measures are rather close to nominal ones.

Figure \ref{f2} shows the integrated absolute difference between the cumulative distribution functions (CDFs) of MLE normalized by the four information measures and the CDF of standard normal distribution when $n_1=n_{1}^{*}$. Regarding the integrated absolute difference between the CDF of t-distribution with 60 degree of freedom and standard normal distribution as a benchmark, one can see that normalizing MLEs with each information measure brings them closer to being standard normal as $n$ increases. In addition, the distribution of MLEs normalized by each random information measure is as close to the standard normal distribution as the one normalized by the expected information. The reason is that $n_1=n_{1}^{*}$ under this model is large enough that the MLE converges to a normal distribution instead of a random scale mixture of normal distribution.

\begin{table}[h!]
\caption{Tail probabilities of the MLE  normalized with four different information measures under the
logistic-scale Model}
\vspace*{0.1in}
\centering
{
\begin{tabular}{lccccc}
\hline\hline
 \multicolumn{1}{c}{$n_1/n$}  & \multicolumn{1}{c}{Information Measure} & \multicolumn{4}{c}{Left Tail / Right Tail} \\
 \hline
& Nominal &0.005/0.005 &	0.025/0.025 &	0.050/0.050 &	0.100/0.100 \\
\hline

& $i(\xi_{A},\theta)$ & 0.010/0.016 & 0.033/0.156 & 0.055/0.176 & 0.098/0.216 \\
{$30/400$}
& $j_n(\theta)$  & 0.013/0.002 & 0.041/0.015 & 0.065/0.034 & 0.107/0.079 \\
& $J_n^{\mathcal{D}}(\theta)$ & 0.012/0.001 & 0.040/0.013 & 0.063/0.030 & 0.106/0.073 \\
& $I_n^{\mathcal{D}}(\theta)/I_n^{D}(\theta)$ & 0.012/0.002 & 0.040/0.012 & 0.064/0.031 & 0.106/0.073 \\
\hline
& $i(\xi_{A},\theta)$ & 0.008/0.024 & 0.030/0.051 & 0.053/0.074 & 0.102/0.121 \\
{$100/2000$}
& $j_n(\theta)$  & 0.009/0.003 & 0.031/0.020 & 0.054/0.046 & 0.106/0.093 \\ 
& $J_n^{\mathcal{D}}(\theta)$ & 0.009/0.003 & 0.031/0.020 & 0.054/0.045 & 0.105/0.092 \\ 
& $I_n^{\mathcal{D}}(\theta)/I_n^{D}(\theta)$ & 0.009/0.003 & 0.030/0.020 & 0.054/0.045 & 0.106/0.092 \\
\hline
& $i(\xi_{A},\theta)$ & 0.007/0.032 & 0.027/0.049 & 0.052/0.076 & 0.100/0.125 \\
{$100/3000$}
& $j_n(\theta)$  & 0.007/0.004 & 0.028/0.023 & 0.053/0.049 & 0.104/0.101 \\ 
& $J_n^{\mathcal{D}}(\theta)$ & 0.007/0.004 & 0.027/0.022 & 0.052/0.049 & 0.104/0.100 \\ 
& $I_n^{\mathcal{D}}(\theta)/I_n^{D}(\theta)$ & 0.007/0.004 & 0.028/0.023 & 0.053/0.048 & 0.103/0.100 \\
\hline

\end{tabular}
}\vspace{12pt}
\label{t3}

\caption{Tail probabilities of the MLE  normalized with four different information measures under the 
logistic-scale Model}
\vspace*{0.1in}
\centering
{
\begin{tabular}{lccccc}
\hline\hline
 \multicolumn{1}{c}{$n_1^{*}/n$}  & \multicolumn{1}{c}{Information Measure} & \multicolumn{4}{c}{Left/Right} \\
  \hline
& Nominal &0.005/0.005 &	0.025/0.025 &	0.050/0.050 &	0.100/0.100 \\
\hline
& $i(\xi_{A},\theta)$ & 0.012/$<$0.001 & 0.037/0.006 & 0.063/0.025 & 0.112/0.072 \\
{$174/400$}
& $j_n(\theta)$  & 0.013/0.001 & 0.040/0.011 & 0.068/0.030 & 0.116/0.080 \\
& $J_n^{\mathcal{D}}(\theta)$ & 0.013/0.001 & 0.038/0.010 & 0.067/0.028 & 0.116/0.078 \\
& $I_n^{\mathcal{D}}(\theta)/I_n^{D}(\theta)$ & 0.013/$<$0.001 & 0.040/0.010 & 0.067/0.029 & 0.115/0.078 \\
\hline
& $i(\xi_{A},\theta)$ & 0.006/0.002	& 0.028/0.020	& 0.058/0.045	& 0.105/0.092 \\
{$361/2000$}
& $j_n(\theta)$  & 0.007/0.003	& 0.030/0.020	& 0.060/0.046	& 0.107/0.094 \\
& $J_n^{\mathcal{D}}(\theta)$ & 0.007/0.003	& 0.030/0.020	& 0.059/0.046	& 0.106/0.094 \\
& $I_n^{\mathcal{D}}(\theta)/I_n^{D}(\theta)$ & 0.007/0.002	& 0.029/0.020	& 0.059/0.046	& 0.106/0.093 \\

\hline
& $i(\xi_{A},\theta)$ & 0.007/0.003	& 0.029/0.020	& 0.053/0.044	& 0.102/0.097 \\
{$434/3000$}
& $j_n(\theta)$  & 0.008/0.003	& 0.029/0.022	& 0.054/0.045	& 0.104/0.097 \\
& $J_n^{\mathcal{D}}(\theta)$  & 0.007/0.004	& 0.029/0.022	& 0.053/0.044	& 0.103/0.097 \\
& $I_n^{\mathcal{D}}(\theta)/I_n^{D}(\theta)$ & 0.007/0.003	& 0.029/0.021	& 0.053/0.045	& 0.103/0.097 \\

\hline

\end{tabular}
}
\label{t4}
\end{table}

\begin{figure*}[h!]
\centerline{\includegraphics[height=.4\textheight]{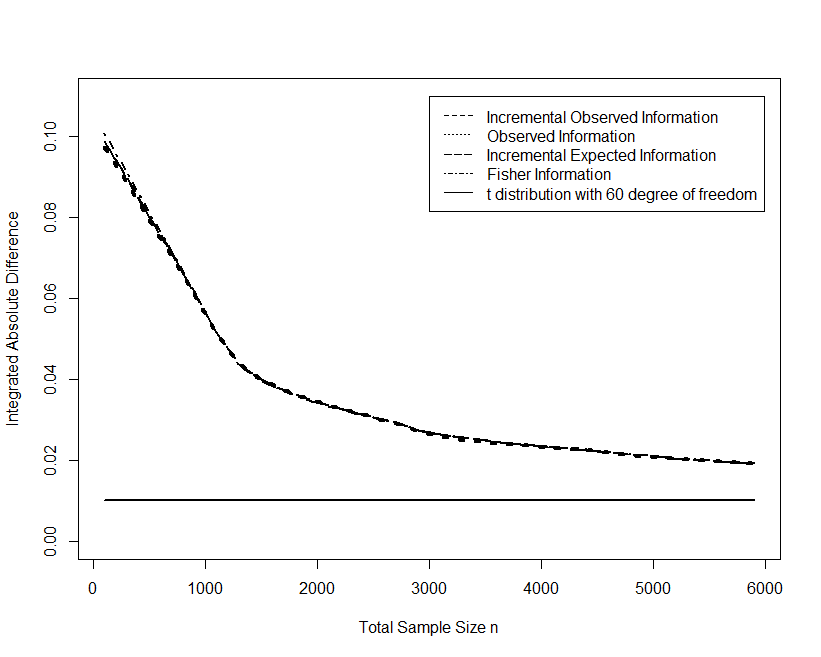}}
\caption{Integrated absolute differences  between the cumulative distributions of MLEs (1) normalized by four information measures and (2) the $t$-distribution with 60 degrees of freedom, and the cumulative distribution of standard normal distribution as $n$ increases from $100$ to $6000$ with $n_1=n_{1}^{*}$ under the logistic-scale model.}
\label{f2}
\end{figure*}

\subsection{The Exponential Regression Model}
Now we present and compare the performance of estimators from two-stage adaptive optimal designs under exponential location and scale models.
\subsubsection{The Exponential-Location  Model}
 The Exponential-Location Model,
$$y=e^{-x+\theta}+\epsilon,\quad\epsilon\sim N(0,\sigma^2),\quad x\in[a,b],\quad -\infty < a<b<\infty,$$
 has first-stage  MLE
\[\hat{\theta}_{n_1} = \left\{ \begin{array}{ll}
x_1+\log\bar{y}_{1} & \mbox{if $\bar{y}_{1}\in(e^{-x_1},e^{-x_1+\bar{\theta}})$}, \\0 & \mbox{if $\bar{y}_{1}\leq e^{-x_1}$}, \\ \bar{\theta} & \mbox{if $\bar{y}_{1}\geq e^{-x_1+\bar{\theta}}$}.
\end{array}\right.\]
The adaptively selected second-stage dose is
$$\hat{x}_{2}=\arg\max[\tfrac{d}{d\theta}\eta(x_{2},\theta)]^2|_{\theta=\hat{\theta}_{n_1}},\quad x_2\in[a,b];$$
Note that $[\tfrac{d}{d\theta}\eta(x_{2},\theta)]^2=(e^{-x_2+\theta})^2=e^{2(\theta-x_2)}$
 is maximized at $x_2=a$,  a constant that does not depend on $\bar{y}_{1}$. So under this model,  if $n_1/n$ is small (and provided common regularity conditions hold with $\dot{\eta}\neq 0,\ |\dot{\eta}|<\infty$), then $U=\sigma[\dot{\eta}(x_{2},\theta)]^{-1}$ in equation \eqref{eq:lane} is a constant, instead of a random function of $\bar{y}_{1}$, even if $n_1$ is held fixed while $n\rightarrow\infty$. In this case, this procedure is simply two separate designs, instead of two-stage adaptive design. Therefore, $\sqrt[]{n}(\hat{\theta}_{n}-\theta)\stackrel{d}{\rightarrow}N(0,U^2)$  as   $n_2\rightarrow\infty$  with $n_1$  fixed.

\subsubsection{The Exponential-Scale  Model}\hfill\\
Consider 
$$y=e^{-\theta x}+\epsilon,\quad\epsilon\sim N(0,\sigma^2), \quad x\in[a,b],\quad -\infty < a<b<\infty,$$
 $\theta=1$. In this model, \citet{lane:flou:info:2014} have the following results: The MLE based on data from first stage is
\[\hat{\theta}_{n_1} = \left\{ \begin{array}{ll}
\frac{-\log\bar{y}_{1}}{x_1} & \mbox{if $\bar{y}_{1}\in(e^{-x_{1}\bar{\theta}},1)$}, \\0 & \mbox{if $\bar{y}_{1}\geq 1$}, \\ \bar{\theta} & \mbox{if $\bar{y}_{1}\leq e^{-x_{1}\bar{\theta}}$}.
\end{array}\right.\]
The   second stage uses the dose 
\[\hat{x}_{2} = \left\{ \begin{array}{ll}
\hat{\theta}_{n_1}^{-1}  & \mbox{if $\bar{y}_{1}\in(e^{-x_{1}/a},e^{-x_{1}/b})$}, \\b & \mbox{if $\bar{y}_{1}\geq e^{-x_{1}/b}$}, \\ a & \mbox{if $\bar{y}_{1}\leq e^{-x_{1}/a}$}.
\end{array}\right.\]
\citet{lane:flou:two-:2012} find the average expected information for the Exponential-Scale Model to be 
\begin{align*}
\frac{1}{n}i(\xi_{A},\theta)=\frac{1}{\sigma^2}\Bigg\{ w_{1}x_{1}^{2}e^{-2\theta x_{1}}
&+w_{2}\pi_{a}a^{2}e^{-2\theta a}+w_{2}\pi_{b}b^{2}e^{-2\theta b} \\
  &+w_{2}E_{\bar{y}_{1}}\left[\left(\frac{-\log\bar{y}_{1}}{x_1}\right)^{2}e^{2\theta x_{1}/\log\bar{y}_{1}}I(e^{-x_{1}/a}<\bar{y}_{1}<e^{-x_{1}/b})\right]\Bigg\},\end{align*}
where $\pi_{a}=\Phi(\sqrt{n_1}(e^{-x_{1}/a}-e^{-x_{1}\theta})/\sigma)$ and 
$\pi_{b}=1-\Phi(\sqrt{n_1}(e^{-x_{1}/b}-e^{-x_{1}\theta})/\sigma)$. The MLE using all data is 
\[\hat{\theta}_{n} = \left\{ \begin{array}{ll}
\theta_{n}^{'}  & \mbox{if $\theta_{n}^{'}\in(0,1/a)$}, \\0 & \mbox{if $\theta_{n}^{'}\leq 0$}, \\ 1/a & \mbox{if $\theta_{n}^{'}\geq 1/a$},
\end{array}\right.\]
where $\theta_{n}^{'}$ maximizes
$\mathcal{L}_n(\theta|y_{11},\ldots,y_{1,n_1},y_{21},\ldots,y_{2,n_2})
\propto\exp\left\{-\tfrac{n_1}{2\sigma^2}[\bar{y}_{1}-e^{-\theta x_{1}}]^2
-\tfrac{n_2}{2\sigma^2}[\bar{y}_{2}-e^{-\theta x_{2}(\bar{y}_{1})}]^2\right\}.$

According to functions \eqref{obs}, \eqref{iobsub} and \eqref{iex},  the observed information is
\begin{align*}
j_n(\theta)
&=\frac{n_1}{\sigma^2}x_{1}^{2}e^{-2\theta x_{1}}
-\frac{n_1}{\sigma^2}\left(\bar{y}_1-e^{-\theta x_{1}}\right)x_{1}^{2}e^{-\theta x_{1}}
+\frac{n_2}{\sigma^2}x_{2}^{2}e^{-2\theta x_{2}}
-\frac{n_2}{\sigma^2}\left(\bar{y}_2-e^{-\theta x_{2}}\right)x_{2}^{2}e^{-\theta x_{2}};
\end{align*}
the subject-wise incremental observed information 
\begin{align*}
J_n^{\mathcal{D}}(\theta)
&=\sum\limits^{n_1}_{i=1}\frac{[y_i-e^{-\theta x_{1}}]^2}{\sigma^4}x_{1}^{2}e^{-2\theta x_{1}}+\sum\limits^n_{i=n_1+1}\frac{[y_i-e^{-\theta x_{2}}]^2}{\sigma^4}x_{2}^{2}e^{-2\theta x_{2}};
\end{align*}
and the stage-wise and subject-wise incremental expected information are
\begin{align*}
I^{D}_n(\theta)=I^{\mathcal{D}}_n(\theta)
&=\frac{{n}_{1}}{\sigma^2}x_{1}^{2}e^{-2\theta x_{1}}+\frac{{n}_{2}}{\sigma^2}x_{2}^{2}e^{-2\theta x_{2}}.
\end{align*}
Again, we use the random information measures  to normalize the MLE $\hat{\theta}_{n}$.

\begin{table}[h!]
\caption{Tail probabilities of MLE of scale parameter normalized by different information measures under the exponential-scale Model}
\vspace*{0.1in}
\centering
{
\begin{tabular}{lccccc}
\hline\hline
 \multicolumn{1}{c}{$n_1/n$}  & \multicolumn{1}{c}{Information Measure} & \multicolumn{4}{c}{Left Tail / Right Tail} \\
 \hline
& Nominal &0.005/0.005 &	0.025/0.025 &	0.050/0.050 &	0.100/0.100 \\
\hline
& $i(\xi_{A},\theta)$ & 0.013/0.037	& 0.039/0.056 &	0.068/0.080 &	0.117/0.128 \\
{$30/500$}
& $j_n(\theta)$  & 0.009/0.003	& 0.033/0.018 &	0.061/0.040 &	0.108/0.090 \\
& $J_n^{\mathcal{D}}(\theta)$ & 0.009/0.002 &	0.033/0.018 &	0.061/0.039 &	0.108/0.088 \\
& $I_n^{\mathcal{D}}(\theta)/I_n^{D}(\theta)$ & 0.009/0.002 &	0.032/0.017 &	0.061/0.039 &	0.108/0.088 \\

\hline
& $i(\xi_{A},\theta)$ & 0.012/0.038	& 0.036/0.060	& 0.062/0.085 &	0.114/0.132 \\
{$30/1000$}
& $j_n(\theta)$  & 0.007/0.003	& 0.029/0.021	& 0.055/0.047	& 0.106/0.094 \\
& $J_n^{\mathcal{D}}(\theta)$ & 0.007/0.003	& 0.028/0.020	& 0.055/0.046	& 0.104/0.093 \\
& $I_n^{\mathcal{D}}(\theta)/I_n^{D}(\theta)$ & 0.007/0.003	& 0.029/0.020	& 0.055/0.046	& 0.106/0.093 \\

\hline
& $i(\xi_{A},\theta)$ & 0.011/0.037	&	0.035/0.059	&	0.062/0.086	&	0.112/0.132 \\
{$30/2000$}
& $j_n(\theta)$  & 0.006/0.004	&	0.028/0.023	&	0.052/0.048	&	0.100/0.096 \\
& $J_n^{\mathcal{D}}(\theta)$ & 0.006/0.004	&	0.027/0.023	&	0.050/0.047	&	0.099/0.096 \\
& $I_n^{\mathcal{D}}(\theta)/I_n^{D}(\theta)$ & 0.006/0.004	&	0.027/0.023	&	0.052/0.048	&	0.099/0.096 \\

\hline
\end{tabular}
}\vspace{12pt}
\label{t5}
\caption{Tail probabilities of MLE of scale parameter normalized by different information measures under the exponential-scale Model}

\vspace*{0.1in}
\centering
{
\begin{tabular}{lccccc}
\hline\hline
 \multicolumn{1}{c}{$n_1^{*}/n$}  & \multicolumn{1}{c}{Information Measure} & \multicolumn{4}{c}{Left/Right} \\
  \hline
& Nominal &0.005/0.005 &	0.025/0.025 &	0.050/0.050 &	0.100/0.100 \\
\hline
& $i(\xi_{A},\theta)$ & 0.012/0.009	& 0.038/0.027	& 0.065/0.052	& 0.117/0.105 \\
{$60/500$}
& $j_n(\theta)$  & 0.010/0.003	& 0.033/0.018	& 0.060/0.040	& 0.111/0.096 \\
& $J_n^{\mathcal{D}}(\theta)$ & 0.010/0.003	& 0.033/0.017	& 0.058/0.039	& 0.110/0.095 \\
& $I_n^{\mathcal{D}}(\theta)/I_n^{D}(\theta)$ & 0.010/0.003	& 0.033/0.017	& 0.060/0.039	& 0.111/0.094 \\

\hline
& $i(\xi_{A},\theta)$ & 0.008/0.007	& 0.035/0.028	& 0.060/0.050	& 0.113/0.100 \\
{$86/1000$}
& $j_n(\theta)$  & 0.007/0.004	& 0.031/0.024	& 0.056/0.046	& 0.107/0.095 \\
& $J_n^{\mathcal{D}}(\theta)$ & 0.007/0.004	& 0.031/0.024	& 0.055/0.046	& 0.107/0.095 \\
& $I_n^{\mathcal{D}}(\theta)/I_n^{D}(\theta)$ & 0.007/0.004	& 0.031/0.024	& 0.056/0.046	& 0.107/0.095 \\

\hline
& $i(\xi_{A},\theta)$ & 0.007/0.004 & 0.030/0.024 & 0.058/0.049 & 0.107/0.101 \\
{$122/2000$}
& $j_n(\theta)$  & 0.005/0.004 & 0.027/0.022 & 0.054/0.047 & 0.103/0.098 \\
& $J_n^{\mathcal{D}}(\theta)$ & 0.005/0.004 & 0.027/0.022 & 0.053/0.046 & 0.103/0.097 \\
& $I_n^{\mathcal{D}}(\theta)/I_n^{D}(\theta)$ & 0.005/0.004 & 0.027/0.022 & 0.054/0.046 & 0.107/0.101 \\

\hline
\end{tabular}
}
\label{t6}
\end{table}

Because all information measures perform poorly when $n$ is smaller than $500$, we present tail probabilities of MLE normalized by different information measures when the first-stage sample~size is fixed at $n_1=30$ and $n=\{500, 1000, 2000\}$ in Table \ref{t5}. In each scenario, $10,000$ data sets are generated. The table shows that  MLEs standardized by random information measures have tail probabilities  closer to  the normal ones than is obtained using the expected information for all scenarios. All the random norms have similar convergence rates. When $n=2000$, the tail probabilities of MLE normalized by random information measures are almost the same as nominal ones. Furthermore, recall that the standard normal limit is not obtained by normalizing the MLE using the expected information.

Table \ref{t6} shows  tail probabilities of MLEs normalized by different information measures when the first stage sample size is $n_1=n_{1}^{*}$ and $n=\{500, 1000, 2000\}$. Note that the performance using random information measures does not change  much from what is shown in table~\ref{t5}, while the expected information performs better. When $n=2000$, all random information measures perform well on normalizing MLE.

Figure \ref{f3} shows MLEs normalized by random information measures are closer to standard normal than those normalized by the expected information in terms of integrated absolute difference as expected. Additionally, the observed information still performs best when the total sample size $n$ is small. And as $n$ increases, MLE with all information measures is closer to standard normal distribution.

\begin{figure*}[h!]
\centerline{\includegraphics[height=.4\textheight]{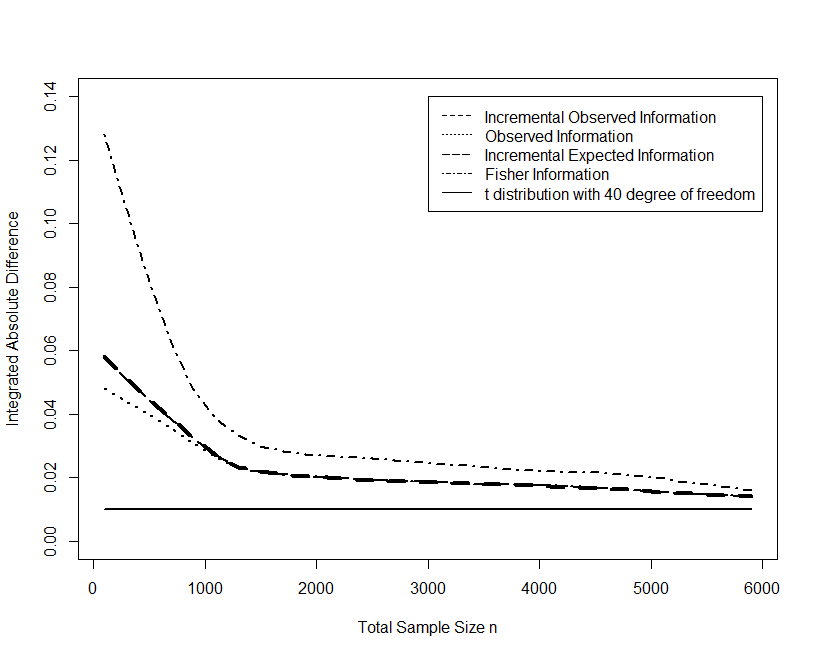}}
\caption{Integrated absolute differences  between the cumulative distributions of MLEs (1) normalized by four information measures and (2) the $t$-distribution with 60 degree of freedom, and the cumulative distribution of standard normal distribution as $n$ increases from $100$ to $6000$ with $n_1=n_{1}^{*}$ under the exponential-scale model.}
\label{f3}
\end{figure*}

\section{Discussion}
A major motivation for this work was the finding that using the Fisher information  to normalize the MLE (following   a two-stage optimal adaptive design under nonlinear regression models with independent normal errors) produces a normal variance mixture limiting distribution instead of a normal limit when the first stage sample size is held fixed [\citep{lane:flou:two-:2012}]. Although they found the asymptotic distribution of the MLE in the case of an exponential mean function, their derivation is not generalizable. This paper follows up on the comment by \citet{Barn:Sore:arev:1994} that replacing the Fisher information with random norms can yield normal limits. We show how this is done using the Generalized Cram{\'e}r-Slutzky theorem in the situation studied by \citet{lane:flou:two-:2012}. 

We also establish this result in the same situation. That is, we derived  the observed,  incremental expected and observed information measures in the case of a two-stage adaptive design under a general nonlinear regression model with conditionally independent normal errors and proved that using them to 
norm  the MLE 
yields standard normal distributions when the first sample size is fixed and the second stage sample size is large. 

We illustrate these findings assuming logistic and exponential mean functions, and compare  the estimation performance  using the three random information measures with the Fisher information norming. We found better performance using the observed information  than using other norms, including the Fisher information  under both models.

Additionally, we show that larger sample sizes are required to obtain normal tail probabilities for the scale parameter than for the location parameter under the logistic model. Moreover, the location parameter under the logistic model converges faster than the scale parameter under the exponential model.

It is important to recall from \citet{lane:flou:two-:2012} that the independence of $U$ and $Z$ in \eqref{eq:lane} results from the independence of the sample mean and standard deviation in the normal  error distribution of model \eqref{eq:model}. The consequences of changing the error distribution need to be investigated.

 A variety of adaptive methods are used in clinical trials, including, for example, enrichment designs,   early stopping for toxicity and/or efficacy, and  sample size re-estimation. We plan to explore  whether the methodology presented in this paper is applicable in some situations following these adaptive methods as well.
 
\section{Acknowledgement}
We thank Dave Mason and Erich H{\"a}usler for useful conversations on stable convergence. And we thank Dr. H{\"a}usler especially for sharing his expertise concerning the proof of Lemma 3.1.

\bibliography{references}

\end{document}